\SetMathAlphabet{\mathrm}{normal}{\encodingdefault}{\rmdefault}{m}{n}%
\SetMathAlphabet{\mathbf}{normal}{\encodingdefault}{\rmdefault}{bx}{n}%
\SetMathAlphabet{\mathsf}{normal}{\encodingdefault}{\sfdefault}{m}{n}%
\DeclareSymbolFont{italics}{\encodingdefault}{\rmdefault}{m}{it}%
\DeclareSymbolFontAlphabet{\mathrm}{operators}
\DeclareSymbolFontAlphabet{\mathit}{letters}
\DeclareSymbolFontAlphabet{\mathcal}{symbols}
\def\@setmcodes#1#2#3{{\count0=#1 \count1=#3
        \loop \global\mathcode\count0=\count1 \ifnum \count0<#2
        \advance\count0 by1 \advance\count1 by1 \repeat}}
\newcommand{\normalfootnotesize}{%
   \@setfontsize\normalfootnotesize\@viiipt{9.5}%
   \abovedisplayskip 6\p@ \@plus2\p@ \@minus4\p@
   \abovedisplayshortskip \z@ \@plus\p@
   \belowdisplayshortskip 3\p@ \@plus\p@ \@minus2\p@
   \def\@listi{\leftmargin\leftmargini
               \topsep 3\p@ \@plus\p@ \@minus\p@
               \parsep 2\p@ \@plus\p@ \@minus\p@
               \itemsep \parsep}%
   \belowdisplayskip \abovedisplayskip
}
\newcommand{\KW}[1]{\mathop{\sf #1}}
\newcommand{\TRUE}{\KW{true}}
\newcommand{\FALSE}{\KW{false}}
\newcommand{\ABORT}{\KW{abort}}
\newcommand{\SKIP}{\KW{skip}}
\newcommand{\RAISE}{\KW{raise}}
\newcommand{\STOP}{\KW{stop}}
\newcommand{\MAGIC}{\KW{magic}}
\newcommand{\semi}{\mathbin{;}}
\newcommand{\meet}{\sqcap}
\newcommand{\join}{\sqcup}
\newcommand{\SEMI}{\mathbin{;\!;}}
\newcommand{\TRY}{\KW{try}}
\newcommand{\CATCH}{\KW{catch}}
\newcommand{\FINALLY}{\KW{finally}}
\newcommand{\IF}{\KW{if}}
\newcommand{\THEN}{\KW{then}}
\newcommand{\ELSE}{\KW{else}}
\newcommand{\WHILE}{\KW{while}}
\newcommand{\DO}{\KW{do}}
\newcommand{\DEF}{\KW{def}}
\newcommand{\VAL}{\KW{val}}
\newcommand{\con}{\wedge}
\newcommand{\dis}{\vee}
\newcommand{\imp}{\Rightarrow}
\newcommand{\cons}{\Leftarrow}
\newcommand{\fun}{\rightarrow}
\newcommand{\rel}{\leftrightarrow}
\renewcommand{\dot}{\mathop{.}}
\newcommand{\defeq}{\left.\widehat{=}\right.}
\newcommand{\refby}{\sqsubseteq}
\newcommand{\ID}{\KW{id}}
\newcommand{\DIV}{\KW{div}}
\newcommand{\TOTAL}[3]{[\hspace{-.15em}[#1]\hspace{-.15em}]\,#2\,[\hspace{-.15em}[#3]\hspace{-.15em}]}
\newcommand{\PARTIAL}[3]{\langle\hspace{-.2em}|#1|\hspace{-.2em}\rangle\,#2\,\langle\hspace{-.2em}|#3|\hspace{-.2em}\rangle}
\newcommand{\trefby}{\sqsubseteq}
\newcommand{\prefby}{\mathbin{\raisebox{.15em}{$\sqsubset$}\hspace{-.73em}\raisebox{-.3em}{\small$\sim$}}}
\newcommand{\RESULT}{\KW{result}}
\newcommand{\eq}{\equiv}
\newcommand{\CLASS}{\KW{class}}
\newcommand{\METHOD}{\KW{method}}
\newcommand{\Int}{\KW{int}}
\newcommand{\INVARIANT}{\KW{invariant}}
\newcommand{\NEW}{\KW{new}}
\newcommand{\LOOPINV}{\KW{loop\ invariant}}
\title{On a New Notion of Partial Refinement}
\author{Emil Sekerinski
  \institute{McMaster University\\ Hamilton, Canada}
  \email{emil@mcmaster.ca}
\and Tian Zhang
  \institute{McMaster University\\ Hamilton, Canada}
  \email{zhangt26@mcmaster.ca}
}
\begin{document}
\maketitle

\newenvironment{eqnarr}
    {\[\begin{array}{@{}lllllllllll@{}}}
    {\end{array}\]}
\newtheorem{definition}{Definition}
\newtheorem{theorem}{Theorem}
\newenvironment{proof}{\par\noindent\em Proof.}{\par}

\begin{abstract}

Formal specification techniques allow expressing idealized specifications, which abstract from restrictions that may arise in implementations. However, partial implementations are universal in software development due to practical limitations. Our goal is to contribute to a method of program refinement that allows for partial implementations. For programs with a normal and an exceptional exit, we propose a new notion of partial refinement which allows an implementation to terminate exceptionally if the desired results cannot be achieved, provided the initial state is maintained. Partial refinement leads to a systematic method of developing programs with exception handling.

\end{abstract}

\section{Introduction}

In software development, specifications are meant to be concise by stating abstractly only the intention of a program rather than elaborating on a possible implementation. However, practical restrictions can prevent idealized specifications from being fully implemented. In general, there are three sources of {\em partiality} in implementations: there may be inherent limitations of the implementation, some features may intentionally not (yet) be implemented, or there may be a genuine fault.

As an example of inherent limitations of an implementation, consider a class for the analysis of a collection of integers. The operations are initialization, inserting an integer, and summing all its elements. Assume that $\Int$ is a type for machine-representable integers, bounded by $MIN$ and $MAX$, and machine arithmetic is bounded, i.e. an overflow caused by arithmetic operations on $\Int$ is detected and raises an exception, as available in x86 assembly language~\cite{Intel13Assembly} and .NET~\cite{MS13OverflowException}.
We define:

\begin{center}
$\begin{array}{l}
\CLASS~IntCollection\\
\quad \mathop{\sf bag}(\Int)\,b\\
\quad \INVARIANT~inv: \forall\, x \in b \cdot MIN \leq x \leq MAX \\
\quad \METHOD~init()\\
\quad \quad b := []\\
\quad \METHOD~insert(n : \Int)\\
\quad \quad b := b + [n] \\
\quad \METHOD~sum(): \Int\\
\quad \quad \RESULT := \sum b \\
\end{array}$
\end{center}
This specification allows an unbounded number of machine-representable integers to be stored in the abstract bag $b$, which is an unordered collection that, unlike a set, allows duplication of elements, The empty bag is written as $[]$, the bag consisting only of a single $n$ as $[n]$, union of bags $b, b'$ as $b + b'$, and the sum of all elements of bag $b$ as $\sum b$. (A model of bags is a function from elements to their number of occurrences in the bag.) However, in an implementation, method $init$ (object initialization) and $insert$ may fail due to memory exhaustion at heap allocation (and raise an exception), and method $sum$ may fail due to overflow of the result (and raise an exception). Even if the result of $sum$ is machine-representable, the iterative computation of the sum in a particular order may still overflow. Hence no realistic implementation of this class can be faithful. Still, it is a useful specification because of its clarity and brevity. Obviously, using mathematical integers instead of machine-representable integers in the specification would make implementations even ``more partial''.

The second source of partiality is intentionally missing features. The evolutionary development of software consists of anticipating and performing \emph{extensions} and~\emph{contractions}~\cite{Parnas78ExtensionContraction}. Anticipated extensions can be expressed by features that are present but not yet implemented. Contractions lead to obsolete features that will eventually be removed. In both cases, implementations of features may be missing. A common practice is to raise an exception if a feature is not yet implemented. Here is a recommendation from the Microsoft Developer Documentation for .NET~\cite{MS13NotImplementedException}:

\begin{center}
\begin{tabular}{c}
\begin{lstlisting}[language=C++,columns=fullflexible,identifierstyle=\itshape,commentstyle=]
static void FutureFeature()
{
   // Not developed yet.
   throw new NotImplementedException();
}
\end{lstlisting}
\end{tabular}
\end{center}

The third source of partiality is genuine faults. These may arise from the use of software layers that are themselves faulty (operating system, compilers, libraries), from faults in the hardware (transient or permanent), or from errors in the design (errors in the correctness argument, incorrect hypothesis about the abstract machine).

Our goal is to contribute to a method of program refinement that allows for partial implementations that guarantee ``safe" failure if the desired outcome cannot be computed. To this end, we consider program statements with a one entry and two exits, a normal and an exceptional exit. Specifications are considered to be abstract programs and implementations are  considered to be concrete programs. For program statements $S$ (the specification) and $T$ (the implementation), the total refinement of $S$ by $T$ means that either $T$ terminates normally and establishes a postcondition that $S$ may also establish on normal termination, or $T$ terminates exceptionally and establishes a postcondition that $S$ may also establish on exceptional termination. As a relaxation, {\em partial refinement} allows $T$ additionally to terminate exceptionally provided the initial state is preserved. The intention is that the implementation $T$ tries to meet specification $S$, but if it cannot do so, $T$ can fail safely by not changing the state and terminating exceptionally. When applying partial refinement to data refinement, an implementation that cannot meet the specification and fails may still change the state as long as the change is not visible in the specification.


The exception handling of the Eiffel programming language provided the inspiration for partial refinement~\cite{Meyer97OOSoftwareConstruction}: in Eiffel, each method has one entry and two exits, a normal and an exceptional exit, but is specified by a single precondition and single postcondition only. The normal exit is taken if the desired postcondition is established and the exceptional exit is taken if the desired postcondition cannot be established, thus allowing partial implementations. Other approaches to  exceptions rely on specifying one postcondition for normal exit and one for each exceptional exit (in case there is more than one exceptional exit). Compared to that, the Eiffel approach is simpler in requiring a single postcondition and more general in that an valid outcome is always possible, even in presence of unanticipated failure. Compared to Eiffel, our notion of partial refinement imposes the additional constraint that in case of exceptional exit, the (abstract) state must be preserved. In Eiffel, the postcondition must be evaluated at run-time to determine if the normal or exceptional exit it taken (and hence must be efficiently computable). Partial refinement is more general  in the sense that it does not require that a postcondition to be evaluated at run-time, as long as faults are detected at run-time in some way. In our earlier work we introduced a new notion of {\em partial correctness}
~\cite{SekerinskiZhang11PartialCorrectness}, which also models the programs that can fail safely. (This notion of partial correctness is different from relaxing total correctness by not guaranteeing termination~\cite{JacobsGries85GeneralCorrectness}.) Partial refinement is related to this notion of partial correctness in the same sense as total refinement is related to total correctness.

Retrenchment also addresses the issue of partial implementations, but with statements with one entry and one exit~\cite{BanachaPoppletonJeskeaStepney07Retrenchment}: the refinement of each operation of a data type requires a \emph{within} and a \emph{concedes} relation that restrict the initial states and widen the possible final states of an implementation. Compared to retrenchment, partial refinement does not require additional relations to be specified. Partial refinement is more restrictive in the sense that initial states cannot be restricted and the final state cannot be widened on normal termination. In partial refinement, the caller is notified through an exception of the failure; retrenchment is a design technique that is independent of exception handling.

The term partial refinement has been introduced with a different meaning in~\cite{Back81CorrectRefinement} for statements with one exit. There, partial refinement allows the domain of termination to be reduced. In the present work, partial refinement requires termination, either on the normal or exceptional exit. Partial refinement in \cite{JeffordsHeitmeyerArcherLeonard09CorrectFaultTolerantSystems} refers to transition system refinement with a partial refinement relation; the approach is to construct fault-tolerant systems in two phases, first with an idealized specification and then adding fault-tolerant behaviour. Here we use partial refinement specifically for statements with two exits.

We define statements by predicate transformers, as they are expressive in that they can describe blocking, abortion, demonic nondeterminism, and angelic nondeterminism, see e.g.~Back and von Wright~\cite{BackVonWright98RefinementCalculus} for a comprehensive treatment; that line of work considers statements with one entry and one exit. In the approach to exception handling by Cristian~\cite{Cristian84CorrectRobustPrograms} statements have one entry and multiple exits (one of those being the normal one) and are defined by a set of predicate transformers, one for each exit. As pointed out by King and Morgan~\cite{KingMorgan95ExitsInRefinementCalculus}, this disallows nondeterminism, which precludes the use of the language for specification and design; their solution is to use a single predicate transformer with one postcondition for each exit instead and define total refinement as a generalization of refinement of statements with one exit. (For example, $\SKIP \meet \RAISE$ terminates, but is neither guaranteed to terminate normally nor exceptionally; hence its termination cannot be captured by a predicate transformer for each exit.) Leino and Snepscheut justify a very similar definition of statements by predicate transformers with two postconditions in terms of a trace semantics, but do not consider refinement~\cite{LeinoSnepscheut94SemanticsExceptions}. Here, we identify a statement $S$ with its (higher-order) predicate transformer following~\cite{BackVonWright98RefinementCalculus}, instead of introducing a function $wp(S)$ in Dijkstra's style~\cite{Dijkstra75GuardedCommands}. Watson uses total refinement to further study the refinement in the style of King and Morgan~\cite{Watson02RefiningExceptions}. With total refinement, an implementation has to meet its specifications, but may use raising and catching exceptions as an additional control structure. Our definition of partial refinement on statements with two exits is in terms of total refinement, but allows implementations to fail safely. Compared to earlier work on refinement with exceptions, we consider undefinedness in expressions and give corresponding (more involved) verification rules. King and Morgan trace the use of predicate transformers with multiple postconditions to Back and Karttunen~\cite{BackKarttunen83PredicateTransformerMultipleExits}. 

\paragraph{Overview.} The next two sections give the formal definitions of statements with two exits using predicate transformers. Section~\ref{sec:toco} and~\ref{sec:paco} introduce the notions of total correctness and partial correctness for statements with two exits. Section~\ref{sec:loop} gives verification rules for loops. In Section~\ref{sec:partial} we give the formal definition of partial refinement, and show its applications through examples. Section~\ref{sec:conclusion} concludes our contribution.

All the theorems and examples have been checked with the Isabelle proof assistant\footnote{The Isabelle proof files are available at \url{http://www.cas.mcmaster.ca/~zhangt26/REFINE/}.}, so we allow ourselves to omit the proofs.

\section{Statements with Two Exits}
\label{sec:basic}

We begin by formally defining statements with a one entry and two exits, a {\em normal} and an {\em exceptional} exit.  A statement either {\em succeeds} (terminates normally), {\em fails} (terminates exceptionally), {\em aborts} (is out of control and may not terminate at all), or {\em blocks} (refuses to execute). Following~\cite{BackVonWright98RefinementCalculus}, a statement $S$ is identified with its predicate transformer, but taking two postconditions as arguments: $S\,(q, r)$ is the weakest precondition such that either $S$ succeeds with postcondition $q$ or $S$ fails with postcondition $r$.

A {\em state predicate} of type ${\cal P} \Sigma$ is a function from state space $\Sigma$ to $Bool$, i.e.~${\cal P} \Sigma = \Sigma \fun Bool$. A {\em relation} of type $\Delta \leftrightarrow \Sigma$ is a function from the initial state space $\Delta$ to a state predicate over the final state space $\Sigma$, i.e. $\Delta \leftrightarrow \Sigma = \Delta \fun {\cal P} \Sigma$; we allow the initial and final state spaces to be different. It is isomorphic to ${\cal P} (\Delta \times \Sigma)$, but allows the test if $(a, b)$ is in relation $r$ to be simply written as $r\,a\,b$.

A {\em predicate transformer} is a function from a normal postcondition of type ${\cal P} \Psi$ and an exceptional postcondition of type ${\cal P} \Omega$, to a precondition of type ${\cal P} \Delta$, i.e. of type ${\cal P} \Psi \times {\cal P} \Omega \fun {\cal P} \Delta$, for types $\Psi, \Omega, \Delta$. In the rest of the paper, we leave the types out if they can be inferred from the context.

On state predicates, conjunction $\con$, disjunction $\dis$, implication $\imp$, consequence $\cons$, and negation $\neg$ are defined by the pointwise extension of the corresponding operations on $Bool$, e.g. $(p \con q) \sigma \defeq p\,\sigma \con q\,\sigma$. The entailment ordering $\leq$ is defined by universal implication, $p \leq q \defeq \forall\, \sigma \dot p\,\sigma \imp q\,\sigma$. The state predicates $\TRUE$ and $\FALSE$ represent the universally $\TRUE$ respectively $\FALSE$ predicates. Predicate transformer $S$ is {\em monotonic} if $q \leq q' \con r \leq r' \imp S\,(q, r) \leq S\,(q', r')$. A {\em statement} is a monotonic predicate transformer.


We define some basic predicate transformers: $\ABORT$ is completely unpredictable and may terminate normally or exceptionally in any state or may not terminate at all; $\STOP$, also known as $\MAGIC$, miraculously guarantees any postcondition by blocking execution; $\SKIP$ changes nothing and succeeds whereas $\RAISE$ changes nothing and fails. The {\em sequential composition} $S \semi T$ continues with $T$ only if $S$ succeeds whereas the {\em exceptional composition} $S \SEMI T$ continues with $T$ only if $S$ fails. The notation suggests that $\semi$ is sequential composition on the first exit and $\SEMI$ is sequential composition on the second exit. The {\em demonic choice} $S \meet T$ establishes a postcondition only if both $S$ and $T$ do. The {\em angelic choice} $S \join T$ establishes a postcondition if either $S$ or $T$ does:
\[\renewcommand{\arraystretch}{1.2}\begin{array}{lcl@{\qquad\qquad}lcl}
   \ABORT\,(q, r) & \defeq & \FALSE & (S \semi T)\,(q, r) & \defeq & S\,(T\,(q, r), r) \\
   \STOP\,(q, r)  & \defeq & \TRUE  & (S \SEMI T)\,(q, r) & \defeq & S\,(q, T\,(q, r)) \\
   \SKIP\,(q, r)  & \defeq & q      & (S \meet T)\,(q, r) & \defeq & S\,(q, r) \con T\,(q, r) \\
   \RAISE\,(q, r) & \defeq & r      & (S \join T)\,(q, r) & \defeq & S\,(q, r) \dis T\,(q, r)
\end{array}\]
Predicate transformers $\ABORT$, $\STOP$, $\SKIP$, and $\ABORT$ are monotonic and hence statements. Operators $\semi$, $\SEMI$, $\meet$, $\join$ preserve monotonicity. Sequential composition is associative and has $\SKIP$ as unit, giving rise to a monoid structure. Dually, exceptional composition is associative and has $\RAISE$ as unit, giving rise to another monoid structure. Further properties of functions with two arguments, with application to semantics of exceptions, are studied in~\cite{LeinoSnepscheut94SemanticsExceptions, ManoharLeino95ConditionalComposition}.

The {\em total refinement ordering} $\refby$ on predicate transformers with two arguments is defined by universal entailment~\cite{KingMorgan95ExitsInRefinementCalculus}:
\[\renewcommand{\arraystretch}{1.2}\begin{array}{rcl}
  S \refby T & \defeq & \forall\, q, r \dot S\,(q, r) \leq T\,(q, r)
\end{array}\]
Both sequential and exceptional composition are monotonic in both arguments with respect to the total refinement ordering. Statements with the total refinement ordering form a complete lattice, with $\ABORT$ as the bottom element, $\STOP$ as the top element, $\meet$ as the meet operator, and $\join$ as the join operator. These properties are similar to those of predicate transformers with one argument~\cite{BackVonWright98RefinementCalculus}, except that here we have two monoid structures.

We introduce statements for inspecting and modifying the state. Let $u,v$ be state predicates. For predicate transformers with one argument, the assumption $[u]$  does nothing if $u$ holds and blocks otherwise, and the assertions $\{u\}$ does nothing if $u$ hold and aborts otherwise. For predicate transformers with two arguments, the {\em assumption} $[u, v]$ succeeds if $u$ holds, fails if $v$ holds, chooses demonically between these two possibilities if both $u$ and $v$ hold, and blocks if neither $u$ nor $v$ hold. The {\em assertion} $\{u, v\}$ succeeds if $u$ holds, fails if $v$ holds, choosing angelically between these two possibilities if both $u$ and $v$ hold, and aborts if neither $u$ nor $v$ holds. Neither assumption nor assertion change the state if they succeed or fail.
\[\renewcommand{\arraystretch}{1.2}\begin{array}{rcl@{\qquad\qquad}rcl}
   [u, v]\,(q, r) & \defeq & (u \imp q) \con (v \imp r) & \{u, v\}\,(q, r) & \defeq & (u \con q) \dis (v \con r)
\end{array}\]
Both assumption and assertion are monotonic and hence statements. We have that $[\TRUE, \FALSE] = \SKIP = \{\TRUE, \FALSE\}$ and that $[\FALSE, \TRUE] = \RAISE = \{\FALSE, \TRUE\}$. We also have that $[\FALSE, \FALSE] = \STOP$ and that $\{\FALSE, \FALSE\} = \ABORT$. Finally we have that $[\TRUE, \TRUE] = \SKIP \meet \RAISE$ and that $\{\TRUE, \TRUE\} = \SKIP \join \RAISE$.

The {\em demonic update} $[Q, R]$ and the {\em angelic update} $\{Q, R\}$ both update the state according to relation $Q$ and succeed, or update the state according to relation $R$ and fail, the difference being that both the choice offered by the relations and the choice between succeeding and failing are demonic with $[Q, R]$ and are angelic with $\{Q, R\}$. If $Q$ is of type $\Delta \rel \Psi$ and $R$ is of type $\Delta \rel \Omega$, then $[Q, R]$ and $\{Q, R\}$ are of type ${\cal P} \Psi \times {\cal P} \Omega \to {\cal P} \Delta$:
\[\renewcommand{\arraystretch}{1.2}\begin{array}{lcl}
   [Q, R]\,(q, r) \delta & \defeq & (\forall\, \psi \dot Q\,\delta\,\psi \imp q\,\psi) \con (\forall\, \omega \dot R\,\delta\,\omega \imp r\,\omega) \\
   \{Q, R\}\,(q, r) \delta & \defeq & (\exists\, \psi \dot R\,\delta\,\psi \con q\,\psi) \dis (\exists\, \omega \dot R\,\delta\,\omega \con r\,\omega)
\end{array}\]
Both demonic update and angelic update are monotonic in both arguments and hence statements. Writing $\bot$ for the empty relation and $\ID$ for the identity relation we have that $[\ID, \bot] = \SKIP = \{\ID, \bot\}$ and that $[\bot, \ID] = \RAISE = \{\bot, \ID\}$. We also have that $[\bot, \bot] = \STOP$ and that $\{\bot, \bot\} = \ABORT$. Finally we have that $[\ID, \ID] = \SKIP \meet \RAISE$ and that $\{\ID, \ID\} = \SKIP \join \RAISE$. Writing $\top$ for the universal relation, both updates $[\top, \top]$ and $\{\top, \top\}$ terminate, with $[\top, \top]$ making a demonic choice between succeeding and failing, and a demonic choice among the final states, and $\{\top, \top\}$ making these choices angelic.

\section{Derived Statements}
\label{sec:derived}

To establish the connection to predicate transformers with one argument we define:
\[\renewcommand{\arraystretch}{1.2}\begin{array}{lcl@{\qquad\qquad}lcl}
   [u]\,(q, r) & \defeq & u \imp q & 
     [Q]\,(q, r)\,\delta & \defeq & (\forall\, \psi \dot Q\;\delta\;\psi \imp q\;\psi) \\
   \{u\}\,(q, r) & \defeq & u \con q &
     \{Q\}\,(q, r)\,\delta & \defeq & (\exists\, \psi \dot Q\;\delta\;\psi \con q\;\psi)
\end{array}\]
These definitions are identical to predicate transformers with one argument, except for the additional parameter $r$~\cite{BackVonWright98RefinementCalculus}. We have that $[u] = [u, \FALSE]$ and $\{u\} = \{u, \FALSE\}$ as well as $[Q] = [Q, \bot]$ and $\{Q\} = \{Q, \bot\}$.

The common $\TRY S \CATCH T$ statement with {\em body} $S$ and {\em handler} $T$ starts with $S$, if $S$ succeeds, the whole statement succeeds, if $S$ fails, execution continues with $T$. The try-catch statement is directly defined by exceptional composition. (We have introduced the operator $\SEMI$ to stress the duality to $\semi$ in the algebraic structure; partial refinement will break this duality.) The statement $\TRY S\CATCH T \FINALLY U$ with {\em finalization} $U$ is defined in terms of sequential and exceptional composition: $U$ is executed either after $S$ succeeds, after $S$ fails and $T$ succeeds, or after $S$ fails and $T$ fails, in which case the whole statement fails whether $U$ succeeds or fails:
\[\renewcommand{\arraystretch}{1.2}\begin{array}{lcl}
  \TRY S \CATCH T            & \defeq & S \SEMI T \\
  \TRY S \CATCH T \FINALLY U & \defeq & (S \SEMI (T \SEMI (U \semi \RAISE))) \semi U
\end{array}\]
The {\em assignment} $x := E$ is defined in terms of an update statement that affects only component $x$ of the state space. For this we assume that the state is a tuple and variables select elements of the tuple. Here $E$ may be partially defined, as for example in $x := x \DIV y$. A division by zero should lead to failure without a state change, otherwise to success with $x$ updated. A {\em program expression} $E$ is a term for which {\em definedness} $\DEF E$ (the domain of $E$) and {\em value} $\VAL E$ are given; the result of $\DEF E$ and $\VAL E$ are expressions of the underlying logic of pre- and postconditions. For example, if the state space consists of variables $x, y$, we have $\DEF ``x \DIV y" \equiv (\lambda\,x, y \dot y \neq 0)$ and $\VAL ``x \DIV y" = (\lambda\,x, y \dot \textsf{if}~y \neq 0~\textsf{then}~x~\textsf{div}~y~\textsf{else}~NaN)$. We use upper case names $B, E, ES$ for program expressions (partial functions) and lower case names $e, es, p, q, r, \ldots$ for terms of higher-order logic (total functions). The {\em relational update} $x := e$ modifies the $x$ component of the state space to be $e\,(x, y)$ and leaves all other components of the state space unchanged; the initial and final state space are the same. The {\em nondeterministic relational update} $x :\in es$ modifies the $x$ component of the state space to be any element of the set $es\,(x, y)$. Provided that the state space consists of variables $x, y$ we define:
\[\renewcommand{\arraystretch}{1.2}\begin{array}{lcl}
  x := e   & \defeq & \lambda\,x, y \dot \lambda\,x', y' \dot x' = e\,(x, y) \con y' = y \\
  x :\in es & \defeq & \lambda\,x, y \dot \lambda\,x', y' \dot x' \in es\,(x, y) \con y' = y
\end{array}\]
The (deterministic) \emph{assignment} $x := E$ fails if program expression $E$ is not defined, otherwise it succeeds and assigns the value of $E\,(x, y)$ to $x$. The {\em nondeterministic assignment} $x :\in ES$ fails if the program expression $ES$ is not defined, otherwise it succeeds and assigns any element of the set $ES\,(x, y)$ to $x$, the choice being demonic:
\[\renewcommand{\arraystretch}{1.2}\begin{array}{lcl}
  x := E   & \defeq & [\DEF E, \neg \DEF E] \semi [x := \VAL E] \\
  x :\in ES & \defeq & [\DEF ES, \neg \DEF ES] \semi [x :\in \VAL ES]
\end{array}\]
The {\em conditional} $\IF B \THEN S \ELSE T$ fails if Boolean program expression $B$ is not defined, otherwise continues with either $S$ or $T$, depending on the value of $B$:
\[\renewcommand{\arraystretch}{1.2}\begin{array}{rcl}
\IF B \THEN S \ELSE T & \defeq & [\DEF B, \neg \DEF B] \semi (([\VAL B] \semi S) \meet ([\neg \VAL B] \semi T))
\end{array}\]
The {\em loop} $\WHILE B \DO S$ is defined as the least fixed point of $F = (\lambda\,X \dot \IF B \THEN (S \semi X))$ with respect to the total refinement ordering. According to the Theorem of Knaster-Tarski, any monotonic function $f$ over a complete lattice has a unique least fixed point, written $\mu\,f$. Statements form a complete lattice, sequential composition and the conditional are monotonic, hence we can define for Boolean program expression $B$ and predicate transformer $S$:
\[\WHILE B \DO S \defeq \mu\,(\lambda\,X \dot \IF B \THEN S \semi X \ELSE \SKIP)\]
Loops preserve monotonicity: if $S$ is monotonic, then $\WHILE B \DO S$ is also monotonic.

\section{Total Correctness}
\label{sec:toco}

The total correctness assertion $\TOTAL{p}{S}{q}$ states that under precondition $p$, statement $S$ terminates with postcondition $q$. Correctness assertions give a better intuition than predicate transformers with entailment. Total correctness assertions are  generalized to two postconditions, the normal and exceptional postcondition:
\[
  \TOTAL{p}{S}{q, r} \quad\equiv\quad
    \begin{array}[t]{l}
      \textrm{Under precondition $p$, statement $S$ terminates and } \\
      \textrm{-- on normal termination $q$ holds finally,} \\
      \textrm{-- on exceptional termination $r$ holds finally.}
    \end{array}
\]
We say that under $p$, statement $S$ succeeds with $q$ and fails with $r$. If $\TOTAL{p}{S}{q, \FALSE}$ holds, then $S$ never fails, and we write this more concisely as $\TOTAL{p}{S}{q}$. Let $p, q, r$ be state predicates:
\[\renewcommand{\arraystretch}{1.2}\begin{array}{lcl}
  \TOTAL{p}{S}{q, r} & \defeq & p \leq S \, (q, r) \\
  \TOTAL{p}{S}{q}    & \defeq & p \leq S \, (q, \FALSE)
\end{array}\]
The next theorem summarizes the basic rules for total correctness, considering possible undefinedness of program expressions, see also~\cite{Cristian84CorrectRobustPrograms,Jacobs01FormalisationJavaExceptions, KingMorgan95ExitsInRefinementCalculus, LeinoSnepscheut94SemanticsExceptions}:
\begin{theorem} Let $p, q, r$ be state predicates, $B$ be a Boolean expression, and $S, T$ be statements:
\[\renewcommand{\arraystretch}{1.2}\begin{array}{l@{\quad}l@{\quad}l}
  \TOTAL{p}{\ABORT}{q, r}    & \equiv & p = \FALSE\\
  \TOTAL{p}{\STOP}{q, r}     & \equiv & \TRUE \\
  \TOTAL{p}{\SKIP}{q, r}     & \equiv & p \imp q \\
  \TOTAL{p}{\RAISE}{q, r}    & \equiv & p \imp r \\
  \TOTAL{p}{x := E}{q, r}    & \equiv &
    \begin{array}[t]{@{}l}
      (\DEF E \con p \imp q[x \backslash \VAL E]) \con {} \\
      (\neg \DEF E \con p \imp r)
    \end{array} \\
  \TOTAL{p}{x :\in ES}{q, r}	& \equiv &
    \begin{array}[t]{@{}l}
      (\DEF ES \con p \imp \forall\, x' \in \VAL ES \dot q[x \backslash x']) \con {} \\
      (\neg \DEF ES \con p \imp r)
    \end{array} \\
  \TOTAL{p}{S \semi T}{q, r} & \equiv &
    \begin{array}[t]{@{}l@{}l}
      \exists\, h \dot & \TOTAL{p}{S}{h, r} \con {} \\
                       & \TOTAL{h}{T}{q, r}
    \end{array} \\
  \TOTAL{p}{\TRY S \CATCH T }{q, r} & \equiv &
    \begin{array}[t]{@{}l@{}l}
      \exists\, h \dot & \TOTAL{p}{S}{q, h} \con {} \\
                       & \TOTAL{h}{T}{q, r}
    \end{array} \\
  \TOTAL{p}{S \meet T}{q, r} & \equiv & 
    \begin{array}[t]{@{}l}
      \TOTAL{p}{S}{q, r} \con {} \\
      \TOTAL{p}{T}{q, r} \\
    \end{array} \\
  \TOTAL{p}{\IF B \THEN S \ELSE T}{q, r} & \equiv &
    \begin{array}[t]{@{}l}
      \TOTAL{\DEF B \con \VAL B \con p}{S}{q, r} \con {} \\
      \TOTAL{\DEF B \con \neg \VAL B \con p}{T}{q, r} \con {} \\
      (\neg \DEF B \con p \imp r)		
    \end{array}
\end{array}\]
\end{theorem}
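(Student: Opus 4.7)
The proof reduces each equation to a calculation on predicate transformers via the definition $\TOTAL{p}{S}{q, r} \defeq p \leq S\,(q, r)$, so the overall plan is to unfold the defining equation for $S$ on each line and simplify with the pointwise definitions of $\con, \dis, \imp$ and the entailment $\leq$.

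First I would dispatch the atomic cases $\ABORT$, $\STOP$, $\SKIP$, $\RAISE$ by direct substitution: $\ABORT\,(q,r) = \FALSE$ gives $p \leq \FALSE \equiv p = \FALSE$; $\STOP\,(q,r) = \TRUE$ makes the assertion trivially hold; $\SKIP\,(q,r) = q$ and $\RAISE\,(q,r) = r$ yield $p \imp q$ and $p \imp r$. For the demonic choice, $(S \meet T)\,(q,r) = S\,(q,r) \con T\,(q,r)$, and $p \leq a \con b \equiv (p \leq a) \con (p \leq b)$ gives the conjunction immediately.

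The central non-trivial step is the rule for sequential composition, which the try-catch and (after more work) the conditional and assignment rules all depend on. I need
\[
p \leq S\,(T\,(q,r),\,r) \quad\equiv\quad \exists\, h \dot p \leq S\,(h,r) \con h \leq T\,(q,r).
\]
The backward direction is pure monotonicity of $S$ in its first argument: from $h \leq T\,(q,r)$ we get $S\,(h,r) \leq S\,(T\,(q,r),r)$, and transitivity with $p \leq S\,(h,r)$ finishes. The forward direction is the witness step: take $h \defeq T\,(q,r)$, which makes $h \leq T\,(q,r)$ hold reflexively and turns $p \leq S\,(h,r)$ into the hypothesis. The try-catch rule $\TRY S \CATCH T \defeq S \SEMI T$ is symmetric: unfolding $(S \SEMI T)(q,r) = S\,(q, T\,(q,r))$, the same witness construction with $h$ on the exceptional side works, again using monotonicity of $S$ in its second argument.

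For the derived constructs I would chain these basic rules. For $x := E = [\DEF E, \neg \DEF E] \semi [x := \VAL E]$, I would first compute $[\DEF E, \neg \DEF E](q', r) = (\DEF E \imp q') \con (\neg \DEF E \imp r)$ and $[x := \VAL E](q,r) = q[x \backslash \VAL E]$ directly from the assumption and relational update definitions, then compose. The rule for $x :\in ES$ is analogous, with the universal quantifier coming from the $\forall$ in the demonic update. For $\IF B \THEN S \ELSE T$, I would apply the sequential composition and meet rules on its definition, then observe that the witness predicate $h$ from the sequential composition step can be eliminated in favour of the three conjuncts in the statement, by noting $[\VAL B]\,(q,r) = \VAL B \imp q$ absorbs into the precondition of $S$ (and dually for $T$). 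The main obstacle throughout is only bookkeeping: being careful that the intermediate assertion $h$ introduced by the sequential-composition rule can be discharged cleanly when the first component is an assumption, so that the final form shown in the theorem emerges without a leftover existential.
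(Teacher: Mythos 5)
Your proposal is correct and is essentially the intended argument: the paper omits the proof (it is machine-checked in Isabelle), and the expected proof is exactly this routine unfolding of the predicate-transformer definitions, with the witness $h \defeq T\,(q,r)$ plus monotonicity of $S$ (available because statements are by definition monotonic) handling the two composition rules, and direct computation of $[\DEF E, \neg\DEF E]$, the updates, and the guards handling assignment and the conditional. No gaps.
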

Total correctness and total refinement are related in the same way as for programs with one exit:
\begin{theorem}
\label{thm:total_rel}
For predicate transformers $S, T$:
\[\renewcommand{\arraystretch}{1.2}\begin{array}{rcl}
  S \refby T & \equiv & \forall\, p, q, r \dot \TOTAL{p}{S}{q, r} \imp \TOTAL{p}{T}{q, r}
\end{array}\]
\end{theorem}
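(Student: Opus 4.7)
The plan is to unfold both sides with their defining equations and reduce the claim to an instance of transitivity plus a well-chosen witness. Concretely, $S \refby T$ expands to $\forall q,r \dot S(q,r) \leq T(q,r)$ and $\TOTAL{p}{S}{q,r}$ expands to $p \leq S(q,r)$, so the statement to prove becomes
\[
(\forall q,r \dot S(q,r) \leq T(q,r)) \;\equiv\; (\forall p,q,r \dot p \leq S(q,r) \imp p \leq T(q,r)).
\]
Both directions then follow by elementary manipulations of the entailment preorder on state predicates, using only that $\leq$ is transitive and reflexive.

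For the forward direction ($\imp$), I would fix $p,q,r$ and assume $p \leq S(q,r)$. The hypothesis $S \refby T$ gives $S(q,r) \leq T(q,r)$, and transitivity of $\leq$ yields $p \leq T(q,r)$, i.e.\ $\TOTAL{p}{T}{q,r}$. This direction is immediate.

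For the backward direction ($\cons$), the key move is to instantiate the universally quantified $p$ in the hypothesis with the witness $p := S(q,r)$ itself. Fixing arbitrary $q,r$, reflexivity of $\leq$ gives $S(q,r) \leq S(q,r)$, which is precisely $\TOTAL{S(q,r)}{S}{q,r}$. The assumed right-hand side then delivers $\TOTAL{S(q,r)}{T}{q,r}$, i.e.\ $S(q,r) \leq T(q,r)$. Since $q,r$ were arbitrary, we conclude $S \refby T$.

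There is no real obstacle here; the only slightly nontrivial ingredient is recognizing the standard ``take the strongest precondition as witness'' trick in the $\cons$ direction, which is the same device used to prove the analogous characterization for one-exit predicate transformers in~\cite{BackVonWright98RefinementCalculus}. The two-postcondition setting changes nothing, because both the refinement ordering and the correctness assertion are defined pointwise in $(q,r)$, so the proof goes through uniformly in the exceptional postcondition $r$.
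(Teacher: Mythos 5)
Your proof is correct: the paper omits its own proof of this theorem (deferring to the Isabelle files), and your argument -- transitivity of $\leq$ for the forward direction and the witness $p := S\,(q,r)$ for the converse -- is exactly the standard one, the same device the paper itself uses (with witness $S\,(q,r) \con r$) in its proof of the partial-refinement analogue, Theorem~\ref{thm:partial_rel}. Nothing further is needed.
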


\section{Partial Correctness}
\label{sec:paco}

To make the connection between total correctness and partial refinement, we use partial correctness assertion $\PARTIAL{p}{S}{q}$, as introduced in~\cite{SekerinskiZhang11PartialCorrectness}. The notion of total correctness assumes that any possible failure has to be anticipated in the specification; the outcome in case of failure is specified by the exceptional postcondition. An implementation may not fail in any other way. Partial correctness weakens the notion of total correctness by allowing also ``true'' exceptions. For orderly continuation after an unanticipated exception, the restriction is that in that case, the state must not change. We introduce following notation:
\[
  \PARTIAL{p}{S}{q, r} \quad\equiv\quad
    \begin{array}[t]{l}
      \textrm{Under precondition $p$, statement $S$ terminates and } \\
      \textrm{-- on normal termination $q$ holds finally,} \\
      \textrm{-- on exceptional termination $p$ or $r$ holds finally.}
    \end{array}
\]
Both total and partial correctness guarantee termination when the precondition holds. If $\PARTIAL{p}{S}{q, \FALSE}$ holds, then statement $S$ does not modify the state when terminating exceptionally, and we write this more concisely as $\PARTIAL{p}{S}{q}$. Let $p, q, r$ be state predicates:
\[\renewcommand{\arraystretch}{1.2}\begin{array}{lll}
  \PARTIAL{p}{S}{q, r}& \defeq & p \leq S \ (q, p \dis r) \\
  \PARTIAL{p}{S}{q}   & \defeq & p \leq S \ (q, p)
\end{array}\]
Total correctness implies partial correctness, but not vice versa. The very definition of partial correctness breaks the duality between normal and exceptional postconditions that total correctness enjoys. This leads to some curious consequences in the correctness rules.

\begin{theorem} Let $p, q, r$ be state predicates, $B, E$ be program expressions, $x$ be a variable, and $S, T$ be statements:
\[\renewcommand{\arraystretch}{1.2}\begin{array}{l@{\quad}l@{\quad}l}
	\PARTIAL{p}{\ABORT}{q, r}    & \equiv & p = \FALSE \\	
	\PARTIAL{p}{\STOP}{q, r}     & \equiv & \TRUE \\	
	\PARTIAL{p}{\SKIP}{q, r}     & \equiv & p \imp q \\
	\PARTIAL{p}{\RAISE}{q, r}    & \equiv & \TRUE \\
	\PARTIAL{p}{x := E}{q, r}    & \equiv & \DEF E \con p \imp q[x \backslash \VAL E] \\
	\PARTIAL{p}{x :\in ES}{q, r}  & \equiv & \DEF ES \con p \imp \forall\, x' \in \VAL ES \dot q[x \backslash x'] \\
	\PARTIAL{p}{S \semi T}{q, r} & \equiv & \exists\, h \dot
	  \begin{array}[t]{@{}l}
	    \PARTIAL{p}{S}{h, r} \con {} \\
	    \TOTAL{h}{T}{q, p \dis r}
	  \end{array} \\
	\PARTIAL{p}{\TRY S \CATCH T}{q, r} & \equiv & \exists\, h \dot
	  \begin{array}[t]{@{}l}
	    \TOTAL{p}{S}{q, h} \con {} \\
	    \TOTAL{h}{T}{q, p \dis r}
	  \end{array} \\
  \PARTIAL{p}{S \meet T}{q, r} & \equiv &
    \begin{array}[t]{@{}l}
      \PARTIAL{p}{S}{q, r} \con {} \\
      \PARTIAL{p}{T}{q, r}
    \end{array} \\
  \PARTIAL{p}{\IF B \THEN S \ELSE T}{q, r} & \equiv &
    \begin{array}[t]{@{}l}
      \TOTAL{\DEF B \con \VAL B \con p}{S}{q, p \dis r} \con {} \\
      \TOTAL{\DEF B \con \neg \VAL B \con p}{T}{q, p \dis r}
    \end{array}
\end{array}\]
\end{theorem}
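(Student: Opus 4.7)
The plan is to unfold the defining equation $\PARTIAL{p}{S}{q, r} \defeq p \leq S\,(q, p \dis r)$ on the left-hand side of each equivalence and then evaluate $S\,(q, p \dis r)$ using the semantic definitions from Sections~\ref{sec:basic} and~\ref{sec:derived}. The guiding observation, which makes partial correctness rules simpler than their total counterparts, is that any exceptional implication of the form ``$p \con c \imp p \dis r$'' is trivially true whenever the antecedent already carries $p$. This is precisely what erases several clauses that are present in the total correctness version of the theorem.

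For the four atomic statements $\ABORT$, $\STOP$, $\SKIP$, $\RAISE$, the result is a one-line calculation: $\ABORT$ yields $\FALSE$, $\STOP$ yields $\TRUE$, $\SKIP$ yields $p \leq q$, and $\RAISE$ reduces to $p \leq p \dis r$, which is tautologically $\TRUE$ — this is the first visible break of duality. For the assignment $x := E$, I would expand it as $[\DEF E, \neg \DEF E] \semi [x := \VAL E]$ and compute; the exceptional branch $\neg \DEF E \con p \imp p \dis r$ collapses to $\TRUE$, leaving only the normal-case conjunct $\DEF E \con p \imp q[x \backslash \VAL E]$. The rule for $x :\in ES$ is completely analogous.

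For sequential composition and try-catch, I would apply the standard intermediate-assertion technique. To prove $\PARTIAL{p}{S \semi T}{q, r}$ implies the existential, take the witness $h := T\,(q, p \dis r)$: then $\TOTAL{h}{T}{q, p \dis r}$ holds reflexively, and the hypothesis $p \leq S\,(T\,(q, p \dis r), p \dis r) = S\,(h, p \dis r)$ is exactly $\PARTIAL{p}{S}{h, r}$. The reverse direction follows from the monotonicity of $S$ in its first argument together with the two given inclusions. The try-catch case runs identically with the witness $h := T\,(q, p \dis r)$, exploiting the definition $(S \SEMI T)(q, r) = S\,(q, T\,(q, r))$. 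The demonic choice rule falls out immediately from $(S \meet T)(q, p \dis r) = S\,(q, p \dis r) \con T\,(q, p \dis r)$ and the fact that $p \leq a \con b$ iff $p \leq a$ and $p \leq b$.

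The main obstacle, and the bookkeeping-heaviest case, is the conditional. I would unfold $\IF B \THEN S \ELSE T$ into $[\DEF B, \neg \DEF B] \semi (([\VAL B] \semi S) \meet ([\neg \VAL B] \semi T))$, compute its application to $(q, p \dis r)$, and split according to the cases $\DEF B$, $\VAL B$, $\neg \VAL B$. The subtlety is that the undefinedness branch yields an obligation $\neg \DEF B \con p \imp p \dis r$, which is again trivially true and therefore \emph{absent} from the stated rule; one must verify that discarding it is sound. The two remaining conjuncts are then recognizable total correctness assertions for $S$ and $T$ with exceptional postcondition $p \dis r$, matching the right-hand side. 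Throughout, Theorem~\ref{thm:total_rel} and monotonicity are the only nontrivial background results needed; everything else is algebraic simplification of predicate transformers.
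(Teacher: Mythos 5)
Your proposal is correct and follows exactly the intended route: the paper omits this proof (deferring to its Isabelle formalization), and the expected argument is precisely your unfolding of $\PARTIAL{p}{S}{q,r}$ as $p \leq S\,(q, p \dis r)$, evaluation via the semantic definitions, discarding the tautological exceptional obligations of the form $p \con c \imp p \dis r$, and the witness $h := T\,(q, p \dis r)$ with monotonicity for the compositional cases. No gaps; the only cosmetic remark is that the backward direction for $\TRY S \CATCH T$ uses monotonicity of $S$ in its second argument rather than its first, which your appeal to monotonicity of statements covers anyway.
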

The $\RAISE$ statement miraculously satisfies any partial correctness specification by failing and leaving the state unchanged. The rules for assignment and nondeterministic assignment have only conditions in case $E$ is defined; in case $E$ is undefined, the assignment fails without changing the state, thus satisfies the partial correctness specification automatically. We immediately get following consequence rule for any statement $S$:
\[
  \PARTIAL{p}{S}{q, r} \con (q \leq q') \con (r \leq r') \quad\imp\quad \PARTIAL{p}{S}{q', r'}
\]
This rule allows the postconditions to be weakened, like for total correctness, but does not allow the precondition to be weakened.

For $S \semi T$ and $\TRY S \CATCH T$ let us consider the special case when $r = \FALSE$:
\[\renewcommand{\arraystretch}{1.2}\begin{array}{l@{\quad}l@{\quad}l}
	\PARTIAL{p}{S \semi T}{q} & \equiv & \exists\, h \dot
	  \begin{array}[t]{@{}l}
	    \PARTIAL{p}{S}{h} \con 
	    \TOTAL{h}{T}{q, p}
	  \end{array} \\
	\PARTIAL{p}{\TRY S \CATCH T}{q} & \equiv & \exists\, h \dot
	  \begin{array}[t]{@{}l}
	    \TOTAL{p}{S}{q, h} \con 
	    \TOTAL{h}{T}{q, p}
	  \end{array} \\
\end{array}\]
The partial correctness assertion for $S \semi T$ is satisfied if $S$ fails without changing the state, but if $S$ succeeds with $h$, then $T$ must either succeed with the specified postcondition $q$, or fail with the original precondition $p$. For the partial correctness assertion of $\TRY S \CATCH T$ to hold, either $S$ must succeed with the specified postcondition $q$, or fail with $h$, from which $T$ either succeeds with $q$ or fails with the original precondition $p$.

\section{Loop Theorems}
\label{sec:loop}

Let $W \neq \emptyset$ be a well-founded set, i.e.~a set in which there are no infinitely decreasing chains, and let $p_w$ for $w \in W$ be an indexed collection of state predicates called \emph{ranked predicates} of the form $p_w = (p \con (\lambda \sigma \dot v\,\sigma = w)$. Here $p$ is the \emph{invariant} and $v$ the \emph{variant} (\emph{rank}). We define $p_{<w} = (\lambda \sigma \dot (\exists\,w' \in W \dot w' < w \con p_{w'}\,\sigma))$ to be true if a state predicate with lower rank than $p_w$ is true. The fundamental rules for total and partial correctness of loops are as follows:

\begin{theorem}
\label{thm:while_corr}
Assume that $B$ is a Boolean program expression, $r$ is a state predicate and $S$ is a statement. Assume that $p_w$ for $w \in W$ is a ranked collection of state predicates. Then
$$
\begin{array}{rl}
 & \forall\, w \in W \cdot \TOTAL{p_w \con \DEF B \con B}{S}{p_{<w}, r}\\
\imp & \TOTAL{p}{\WHILE B \DO S}{p \con \DEF B \con \neg B, (p \con \neg \DEF B) \dis r}
\end{array}$$
and
$$
\begin{array}{rl}
 & \forall\, w \in W \cdot \PARTIAL{p_w \con \DEF B \con B}{S}{p_{<w}, r}\\
\imp & \PARTIAL{p}{\WHILE B \DO S}{p \con \DEF B \con \neg B, r}
\end{array}$$
\end{theorem}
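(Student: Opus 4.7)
My plan is to use well-founded induction on $w \in W$ together with the Knaster--Tarski fixed-point equation $\WHILE B \DO S = \IF B \THEN S \semi (\WHILE B \DO S) \ELSE \SKIP$ given by the definition of the loop as $\mu F$ for $F = \lambda X \dot \IF B \THEN S \semi X \ELSE \SKIP$. For the total correctness part I would prove $\TOTAL{p_w}{\WHILE B \DO S}{q_N, q_X}$ for every $w \in W$, where $q_N = p \con \DEF B \con \neg B$ and $q_X = (p \con \neg \DEF B) \dis r$; taking the disjunction over $W$ then yields the conclusion, since the ranked predicates $p_w$ are just $p$ stratified by the variant.

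For the inductive step of the total case, I would unfold the loop once via the fixed-point equation and apply the conditional and sequential-composition rules of Theorem~1 with precondition $p_w$. The $\neg \DEF B$ clause reduces to $p_w \con \neg \DEF B \imp q_X$, which is immediate from $p_w \imp p$; the $\DEF B \con \neg B$ branch is discharged by $\SKIP$ together with $p_w \con \DEF B \con \neg B \imp q_N$; and the $\DEF B \con B$ branch reduces, via sequential composition with intermediate predicate $p_{<w}$, to the hypothesis $\TOTAL{p_w \con \DEF B \con B}{S}{p_{<w}, r}$ (after weakening $r$ to $q_X$ by monotonicity of $S$) together with the induction hypothesis $\TOTAL{p_{<w}}{\WHILE B \DO S}{q_N, q_X}$.

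The partial correctness part is more delicate because the conditional rule of Theorem~2 demands \emph{total} correctness of the branches, which fails for $S \semi (\WHILE B \DO S)$ since the inner loop is only partially correct. I would therefore argue directly from the definition, proving $p_w \leq (\WHILE B \DO S)(q_N, p \dis r)$ by well-founded induction on $w$. After substituting the fixed-point equation and unfolding $\IF$, $[u, v]$, $\meet$, and $\semi$, the three sub-cases on $\DEF B$ and $\VAL B$ reappear; the $\neg \DEF B$ and $\DEF B \con \neg B$ sub-cases are immediate from $p_w \imp p$ and the shape of $q_N$. For the body sub-case, the hypothesis $\PARTIAL{p_w \con \DEF B \con B}{S}{p_{<w}, r}$ expands, by definition of partial correctness, to $p_w \con \DEF B \con B \leq S(p_{<w}, (p_w \con \DEF B \con B) \dis r)$; using $p_w \imp p$ and monotonicity of $S$ in its second argument, this strengthens to $p_w \con \DEF B \con B \leq S(p_{<w}, p \dis r)$, and then the induction hypothesis together with monotonicity of $S$ in its first argument yields $p_w \con \DEF B \con B \leq S((\WHILE B \DO S)(q_N, p \dis r), p \dis r)$, as required. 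The main obstacle is precisely this bookkeeping in the partial case: the exceptional postcondition carried by the hypothesis is $(p_w \con \DEF B \con B) \dis r$ rather than the invariant $p \dis r$ appearing in the goal, and the asymmetry between the ranked invariant $p_w$ and the plain invariant $p$ must be bridged repeatedly by invoking $p_w \imp p$ and the two-argument monotonicity of $S$.
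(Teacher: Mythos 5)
The paper itself omits the proof of this theorem (it is only machine-checked in Isabelle), so there is no textual argument to compare against; judged on its own, your proposal is sound and is essentially the argument one would expect. Both halves rest on two legitimate ingredients: the fixed-point equation $\WHILE B \DO S = \IF B \THEN S \semi (\WHILE B \DO S) \ELSE \SKIP$ (only the fixed-point property is used, not leastness --- well-founded induction on the rank supplies termination) and the recovery of the conclusion from $p = \bigvee_{w \in W} p_w$, which holds because $p_w = p \con (\lambda\,\sigma \dot v\,\sigma = w)$ with the variant $v$ mapping states into $W$. Your case split on $\neg\DEF B$, $\DEF B \con \neg\VAL B$ and $\DEF B \con \VAL B$ matches the unfolded semantics of the conditional, and you correctly identify the one genuine subtlety: the partial half cannot be routed through the partial-correctness conditional rule of Theorem~2 (whose branch obligations are total), so it must be argued at the level of $p_w \leq (\WHILE B \DO S)\,(p \con \DEF B \con \neg B,\; p \dis r)$, with the hypothesis' exceptional postcondition $(p_w \con \DEF B \con \VAL B) \dis r$ strengthened to $p \dis r$ via $p_w \leq p$ and monotonicity of $S$ in its second argument, then the induction hypothesis applied through monotonicity in the first argument. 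Two minor points to make explicit in a written-up version: in the total half the intermediate predicate of the sequential-composition rule should be instantiated to $p_{<w}$, with the body's exceptional postcondition $r$ weakened to $(p \con \neg\DEF B) \dis r$ by the consequence rule; and the final disjunction step uses $W \neq \emptyset$ and totality of the variant, both part of the paper's standing assumptions on ranked predicates.
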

These theorems separate the concerns of the two exits: on the normal exit, both rules are similar as with one exit, except that the postconditions include the definedness of the guard $B$. On the exceptional exit, the rule for total correctness is the same as its counterpart with one exit~\cite{BackVonWright98RefinementCalculus}; on the exceptional exit, it states that if the loop body $S$ exits exceptionally with postcondition $r$, then the loop exits exceptionally with postcondition $p \con \neg \DEF B$ (failure of the guard) or $r$ (failure of the loop body).

\section{Partial Refinement}
\label{sec:partial}
In this section, we relax total refinement to partial refinement and study its application. Partial refinement of predicate transformers $S, T$ is defined by:
\begin{eqnarr}
  S \prefby T & \defeq & S \sqcap \RAISE \trefby T
\end{eqnarr}%
This implies that on normal exit, $T$ can only do what $S$ does. However, $T$ may fail when $S$ does not, but then has to preserve the initial state. For example, in such a case $T$ would still maintain an invariant and signal to the caller the failure. Note that the types of $S$ and $T$ require the initial state space to be the same as the final state space on exceptional exit. Partial refinement is related  to partial correctness in the same way as total refinement is related to total correcntess:

\begin{theorem}
\label{thm:partial_rel}
For predicate transformers $S, T$:
\[\renewcommand{\arraystretch}{1.2}\begin{array}{rcl}
S \prefby T & \eq & \forall\, p, q, r \dot \PARTIAL{p}{S}{q, r} \imp \PARTIAL{p}{T}{q, r}
\end{array}\]
\end{theorem}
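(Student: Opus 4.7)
The plan is to unfold both sides to their basic definitions and prove each direction directly. Let me write $(S \meet \RAISE)\,(q,r) = S\,(q,r) \con r$ throughout, which comes straight from the definitions of $\meet$ and $\RAISE$, and recall that $\PARTIAL{p}{S}{q, r} \equiv p \leq S\,(q, p \dis r)$ and $S \refby T \equiv \forall\, q, r \dot S\,(q,r) \leq T\,(q,r)$.

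For the forward direction ($\prefby \imp {}$correctness transfer), assume $S \meet \RAISE \refby T$ and suppose $\PARTIAL{p}{S}{q, r}$ holds, i.e.\ $p \leq S\,(q, p \dis r)$. I instantiate the refinement inequality at the postcondition pair $(q, p \dis r)$, obtaining $S\,(q, p \dis r) \con (p \dis r) \leq T\,(q, p \dis r)$. Since $p \leq p \dis r$ trivially and $p \leq S\,(q, p \dis r)$ by hypothesis, the meet of the two right-hand sides bounds $p$ from above, so $p \leq T\,(q, p \dis r)$, which is $\PARTIAL{p}{T}{q, r}$.

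For the backward direction, assume the partial-correctness implication for all $p, q, r$ and fix arbitrary $q, r$; I must show $S\,(q,r) \con r \leq T\,(q,r)$. The key trick is to pick a precondition tight enough to linearize the disjunction $p \dis r$: set $p \defeq S\,(q,r) \con r$. Then $p \leq r$, so $p \dis r = r$, and from $p \leq S\,(q,r)$ we get $p \leq S\,(q, p \dis r)$, i.e.\ $\PARTIAL{p}{S}{q, r}$. The hypothesis then yields $\PARTIAL{p}{T}{q, r}$, i.e.\ $p \leq T\,(q, p \dis r) = T\,(q, r)$. That is exactly $S\,(q,r) \con r \leq T\,(q,r)$ for arbitrary $q, r$, giving $S \meet \RAISE \refby T$ and hence $S \prefby T$.

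The main obstacle is the backward direction, specifically choosing the witness precondition so that the inherent asymmetry of partial correctness (the $p \dis r$ on the exceptional side) does not obstruct the argument. Setting $p = S\,(q,r) \con r$ is natural in hindsight because it is the \emph{largest} precondition for which the total-correctness-like inequality $p \leq S\,(q,r) \con r$ is automatic, and simultaneously it collapses the disjunction $p \dis r$ to $r$ so that partial and total statements about $S$ coincide at this instance. The forward direction is routine monotonic reasoning on predicates.
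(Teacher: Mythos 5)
Your proof is correct and follows essentially the same route as the paper's: unfold both sides, instantiate the refinement at the pair $(q, p \dis r)$ for the forward direction, and for the backward direction pick the witness precondition $p = S\,(q,r) \con r$, using $p \dis r = r$ to collapse the disjunction. No gaps.
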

\begin{proof}
Unfolding the definitions yields:
\[(\forall\, q, r \dot (S\,(q, r) \con r) \leq T\,(q, r)) \eq (\forall\, p', q', r' \dot p' \leq S\,(q', p' \dis r') \imp p' \leq T\,(q', p' \dis r'))\]
This is shown by mutual implication. For any $p'$, $q'$ and $r'$, by letting $q$ and $r$ to be $q'$ and $p' \dis r'$ respectively, it is straightforward that the left side implies the right side. Implication of the other direction is more invovled: for any $p$ and $q$, letting $p'$, $q'$ and $r'$ to be $S\,(q, r) \con r$, $q$ and $r$ respectively gives us $(S\,(q, r) \con r) \leq S\,(q, S\,(q, r) \con r \dis r) \imp (S\,(q, r) \con r) \leq T\,(q, S\,(q, r) \con r \dis r)$. Since $S\,(q, r) \con r \dis r = r$, we have $(S\,(q, r) \con r) \leq S\,(q, r) \imp (S\,(q, r) \con r) \leq T\,(q, r)$, which reduces to $S\,(q, r) \con r \leq T\,(q, r)$, thus the left side is implied.
\end{proof}

Total refinement implies partial refinement (in the same way as total correctness implies partial correctness); in addition to $\STOP$ as top element, partial refinement has also $\RAISE$ as top element:
\begin{theorem} For predicate transformers $S, T$:
\begin{eqnarr}
    & S \prefby \RAISE \\
    & S \refby T  \quad \imp & S \prefby T
\end{eqnarr}
\end{theorem}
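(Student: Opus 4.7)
The plan is to prove both inequalities by unfolding the definition $S \prefby T \defeq S \meet \RAISE \refby T$ and then computing what $(S \meet \RAISE)(q,r)$ simplifies to. From the definitions given in the paper, $(S \meet \RAISE)(q,r) = S(q,r) \con \RAISE(q,r) = S(q,r) \con r$. Both claims then reduce to elementary pointwise reasoning about state predicates; no heavy machinery is required.

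For the first claim $S \prefby \RAISE$, I would unfold to obtain the obligation $\forall q, r \dot S(q,r) \con r \leq \RAISE(q,r)$, which after applying $\RAISE(q,r) = r$ becomes $S(q,r) \con r \leq r$. This is immediate from the definition of conjunction on state predicates (conjunction is bounded above by each conjunct).

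For the second claim, assume $S \refby T$, i.e.\ $\forall q, r \dot S(q,r) \leq T(q,r)$. Unfolding $S \prefby T$ gives the obligation $\forall q, r \dot S(q,r) \con r \leq T(q,r)$. By the same monotonicity of $\con$ used above, $S(q,r) \con r \leq S(q,r)$, and by the hypothesis $S(q,r) \leq T(q,r)$; chaining these yields the goal.

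I do not anticipate any real obstacle: the proof is essentially a one-line calculation in each case, since $\RAISE$ acts as a right absorber in $\meet$ with respect to the exceptional argument. The only conceptual point worth flagging in the write-up is that the asymmetry of $\prefby$ (compared to $\refby$) comes precisely from the extra conjunct $r$ that appears on the left-hand side after unfolding, which makes the left-hand side weaker and therefore easier to dominate. This also explains why $\RAISE$, unlike in the total refinement ordering, becomes a top element here.
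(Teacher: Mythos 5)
Your proof is correct and is exactly the calculation the paper leaves implicit (the paper omits proofs, deferring to Isabelle): unfolding $S \prefby T$ to $\forall\, q, r \dot S\,(q,r) \con r \leq T\,(q,r)$ and using that a conjunction entails each conjunct settles both claims. One terminological nit: the extra conjunct $r$ makes the left-hand side logically \emph{stronger} (smaller in the entailment order $\leq$), which is precisely why it is easier to dominate, rather than ``weaker'' as you phrase it.
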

The fact that $S \prefby \RAISE$ may be surprising, but it does allow for intentionally missing features, the second source of partiality discussed earlier on.
Like total refinement, partial refinement is a preorder, as it is reflexive and transitive:
\begin{theorem} For predicate transformers $S, T, U$:
\begin{eqnarr}
    & S \prefby S \\
    & S \prefby T \con T \prefby U & \imp & S \prefby U
\end{eqnarr}
\end{theorem}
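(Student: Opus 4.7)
The plan is to prove both parts directly from the definition $S \prefby T \defeq S \meet \RAISE \refby T$, which unfolds pointwise to $\forall\, q, r \dot S\,(q, r) \con r \leq T\,(q, r)$ (using $\RAISE\,(q, r) = r$ and the definition of $\meet$). An attractive alternative, which I would mention, is to reduce both claims to Theorem~\ref{thm:partial_rel} and derive them from reflexivity and transitivity of implication on partial correctness assertions; but the direct route is short enough to carry out explicitly.

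For reflexivity, the key observation is purely propositional: $S\,(q, r) \con r \leq S\,(q, r)$ holds for all $q, r$, so $S \meet \RAISE \refby S$ is immediate. This step is entirely routine.

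For transitivity, suppose $S \prefby T$ and $T \prefby U$, giving $S\,(q, r) \con r \leq T\,(q, r)$ and $T\,(q, r) \con r \leq U\,(q, r)$ for all $q, r$. The step I expect to be the main (mild) obstacle is that one cannot simply chain these two inequalities: the conclusion $S\,(q, r) \con r \leq U\,(q, r)$ needs the $\con r$ conjunct to survive the first step. The trick is to strengthen the first hypothesis to $S\,(q, r) \con r \leq T\,(q, r) \con r$, which is valid because $S\,(q, r) \con r$ is bounded above by both $T\,(q, r)$ (by the first hypothesis) and by $r$ (trivially). Once this strengthened form is in hand, composing it with the second hypothesis yields $S\,(q, r) \con r \leq U\,(q, r)$, i.e. $S \prefby U$, as required.

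Overall the proof is a few lines of entailment reasoning with no induction or fixed-point argument needed; the only delicate point is noticing that partial refinement does not compose by direct transitivity of $\refby$ on $S \meet \RAISE$, since the right-hand side of the first inequality is $T$, not $T \meet \RAISE$, and one must re-introduce the $\con r$ conjunct to bridge the gap.
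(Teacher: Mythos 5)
Your proof is correct: unfolding $S \prefby T$ to $\forall\, q, r \dot S\,(q, r) \con r \leq T\,(q, r)$, reflexivity is immediate, and for transitivity your re-introduction of the $\con\, r$ conjunct (strengthening the first hypothesis to $S\,(q, r) \con r \leq T\,(q, r) \con r$ before chaining) is exactly the right move and the only subtle point. The paper omits its own proof (it is machine-checked in Isabelle), and your direct entailment argument is the natural proof underlying the theorem, so there is nothing to add.
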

Partial refinement is not antisymmetric, for example $\STOP \prefby \RAISE$ and $\RAISE \prefby \STOP$, but $\STOP \neq \RAISE$. With respect to partial refinement, sequential composition is monotonic only in its first operand, 
while demonic choice and conditional statement are monotonic in both operands:
\begin{theorem}
For statements $S, S', T$:
\begin{eqnarr}
   S \prefby S'                   & \imp & S; T \prefby S'; T \\
   S \prefby S' \con T \prefby T' & \imp & S \sqcap T \prefby S' \sqcap T'\\
                                  & \con & \IF B \THEN S \ELSE T \prefby \IF B \THEN S' \ELSE T'\\
\end{eqnarr}
\end{theorem}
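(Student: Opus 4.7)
The plan is to use the pointwise characterization $S \prefby T \eq \forall\, q, r \dot S\,(q, r) \con r \leq T\,(q, r)$, which falls out immediately by unfolding $S \prefby T \defeq S \meet \RAISE \refby T$ together with the definitions of $\meet$ and $\RAISE$, and then to verify each of the three inequalities by direct calculation on the transformer definitions from Section~\ref{sec:basic}.

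For the sequential composition claim, compute $(S \semi T)\,(q, r) = S\,(T\,(q, r), r)$ and analogously for $S' \semi T$. Instantiating the hypothesis $S \prefby S'$ at the postcondition pair $(T\,(q, r), r)$ yields $S\,(T\,(q, r), r) \con r \leq S'\,(T\,(q, r), r)$, which is precisely the required inequality; note that $T$ appears only as an inner postcondition of $S$, which is why no hypothesis about $T$ is needed. For demonic choice, expand $(S \meet T)\,(q, r) = S\,(q, r) \con T\,(q, r)$, so the goal becomes $S\,(q, r) \con T\,(q, r) \con r \leq S'\,(q, r) \con T'\,(q, r)$. Duplicating the conjunct $r$ and regrouping rewrites the left-hand side as $(S\,(q, r) \con r) \con (T\,(q, r) \con r)$, after which the two hypotheses apply conjunctwise.

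The conditional is the most involved and is the main obstacle, although only in bookkeeping rather than in ideas. I would first unfold $\IF B \THEN S \ELSE T$ using its definition together with those of $[u, v]$, $[u]$, $\semi$, and $\meet$, obtaining $(\IF B \THEN S \ELSE T)\,(q, r) = (\DEF B \imp ((\VAL B \imp S\,(q, r)) \con (\neg \VAL B \imp T\,(q, r)))) \con (\neg \DEF B \imp r)$, and similarly for the primed version. Then I would argue pointwise at a fixed state, distinguishing three cases: when $\DEF B \con \VAL B$ holds, the required inequality collapses to $S\,(q, r) \con r \leq S'\,(q, r)$, which is the $S$ hypothesis; when $\DEF B \con \neg \VAL B$ holds, it collapses symmetrically to the $T$ hypothesis; and when $\neg \DEF B$ holds, both sides reduce pointwise to $r$, the right-hand side via the $\neg \DEF B \imp r$ clause and the left-hand side additionally using the extra conjunct $r$ from the partial-refinement characterization. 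In each case the unfolded form makes the subgoal an immediate instance of a hypothesis or a triviality.
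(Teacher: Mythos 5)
Your proposal is correct: the pointwise characterization $S \prefby T \eq \forall\, q, r \dot S\,(q,r) \con r \leq T\,(q,r)$ follows directly from the definitions of $\prefby$, $\meet$, and $\RAISE$, and each of the three claims then reduces by definitional unfolding to an instance of the hypotheses (with the case split on $\DEF B$, $\VAL B$ handling the conditional and the extra conjunct $r$ covering the $\neg \DEF B$ branch). The paper omits its own (Isabelle-checked) proof, but this is exactly the straightforward calculation one would expect, including the correct observation that no hypothesis on $T$ is needed for monotonicity in the first operand of sequential composition.
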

However, $S \prefby S'$ does not imply $T \semi S \prefby T \semi S'$ in general, since $T$ might modify the initial state on normal exit. Similarly, $S \prefby S'$ implies neither $S \SEMI T \prefby S' \SEMI T$ nor $T \SEMI S \prefby T \SEMI S'$.

For \emph{data refinement} we extend the refinement relationships to allow two predicate transformers on possibly different state spaces. We extend the definition of data refinement with predicate transformer, e.g.~\cite{GardinerMorgan91DataRefinementPredicateTransformer,vonWright94LatticeDataRefinement}, which uses an representation operation to link concrete and abstract spaces, to two postconditions.
Suppose $S : {\cal P} \Sigma \times {\cal P} \Sigma \fun {\cal P} \Sigma$ and $T : {\cal P} \Delta \times {\cal P} \Delta \fun {\cal P} \Delta$ are predicate transformers on $\Sigma$ and $\Delta$ respectively, and $R : \Sigma \leftrightarrow \Delta$ is the abstraction relation from $\Sigma$ to $\Delta$. Then we define:
\begin{eqnarr}
T_R & \defeq & [R] \semi ((T \semi \{R^{-1}\}) \SEMI \{\bot, R^{-1}\})
\end{eqnarr}%
The composition $(T \semi \{R^{-1}\}) \SEMI \{\bot, R^{-1}\}$ applies the angelic update $\{R^{-1}\}$ to the normal outcome of $T$, terminating normally, and applies $\{R^{-1}\}$ to the exceptional outcome of $T$, terminating exceptionally. This can be equivalently expressed as $(T \SEMI \{\bot, R^{-1}\}) \semi \{R^{-1}\}$. The demonic update $[R]$ maps the states in $\Sigma$ to states in $\Delta$. Then, after executing $T$, $\{R^{-1}\}$ and $\{\bot, R^{-1}\}$ map the states back to $\Sigma$ from states in $\Delta$, on each exit. In other words, $T_R$ is the projection of $T$ on ${\cal P} \Sigma \times {\cal P} \Sigma \fun {\cal P} \Sigma$ through relation $R$. Now we can define \emph{total data refinement} and {\em partial data refinement} between $S$ and $T$ through $R$ as:
\begin{eqnarr}
S \refby_R T  & \defeq & S \refby T_R \\
S \prefby_R T & \defeq & S \prefby T_R
\end{eqnarr}%
Note that here we could not define $S \prefby_R T$ as $(S; [R]) ;; [R] \prefby [R]; T$, due to the restriction of $\prefby$ that the on both sides the initial state space must be the same as the exceptional state space, which $[R]; T$ obviously does not satisfy.

\subsubsection*{Example: Limitation in Class Implementation}
Now let us revisit the introductory example of the class $IntCollection$. We consider an implementation using a fixed-size array. Using dynamic arrays or a heap-allocated linked list would be treated similarly, as an extension of a dynamic array and heap allocation may fail in the same way as a fixed-sized array may overflow. Since representing dynamic arrays or heaps complicates the model, we illustrate the refinement step using fixed-sized arrays. Let $SIZE$ be a constant of type $\Int$:

\begin{center}
$\begin{array}{l}
\CLASS~IntCollection1 \\
\quad \Int l, \Int []\,a \\
\quad \INVARIANT~inv1: 0 \leq l \leq SIZE \con len(a) = SIZE \con {} \\
\quad \quad (\forall\, x \cdot 0 \leq x < l \imp MIN \leq a[x] \leq MAX) \\
\quad \METHOD~init1() \\
\quad \quad l := 0 \semi a := \NEW\,\Int[SIZE]; \\
\quad \METHOD~insert1(n : \Int) \\
\quad \quad l, a[l] := l + 1, n; \\
\quad \METHOD~sum1(): \Int \\
\quad \quad \Int s, i := 0, 0 ;\\
\quad \quad \{\LOOPINV~linv: 0 \leq i \leq l \con {}\\
\quad \quad \quad s = \sum x \in [0..i) \cdot a[x] \con MIN \leq s \leq MAX)\}\\
\quad \quad \WHILE~i < l~\DO\\
\quad \quad \quad s, i := s + a[i], i + 1 ; \\
\quad \quad \{s = \sum x \in 0..l - 1 \cdot a[x] \con MIN \leq s \leq MAX, inv1\}\\
\quad \quad \RESULT := s\\
\end{array}$
\end{center}
The state spaces of classes $IntCollection$ and $IntCollection1$ are ${\sf bag}(\Int)$ and $\Int \times \Int[]$ respectively. The invariant that links the state spaces is $b = bagof(a[0..l - 1])$, where $bagof$ converts an array to a bag with the same elements. The statement $a := \NEW\,\Int[SIZE]$ might fail due to heap allocation failure. Writing $s^n$ for $n$ times repeating sequence $s$, allocation of an integer array is define as:
\begin{eqnarr}
  x := \NEW\,\Int[n] & \defeq & [\TRUE, \TRUE] \semi x := [0]^n
\end{eqnarr}%
where $[\TRUE, \TRUE]$ might succeed or fail. Here the three methods refine those in class $IntCollection$ respectively, formally $init \prefby_{rel} init1$, $insert \prefby_{rel} insert1$, and $sum \prefby_{rel1} sum1$, in which the relations are given as $rel\,b\,(l, a) \equiv b = bagof(a[0..l - 1])$ and $rel1\,(b, \RESULT)\,(l, a, \RESULT') \equiv b = bagof(a[0..l - 1]) \con \RESULT = \RESULT'$, since $\RESULT$ is part of the state space in $sum1$. We only give the proof sketch of the third one. Using $body$ and $loop$ as the abbreviations of $s, i := s + a[i], i + 1$ and $\WHILE i < l \DO s, i := s + a[i], i + 1$ respectively, and defining $B = (\lambda (s, i, l, a, \RESULT) \dot i < l)$, $linv_w = (linv \con (\lambda (s, i, l, a, \RESULT) \dot w = l - i))$, we have $\DEF B = \TRUE$ and: 

\[\TOTAL{linv \con \DEF B \con B}{body}{linv_{<w}, linv}\]
According to Theorem~\ref{thm:while_corr} we have 

\[\TOTAL{linv \con (\lambda (s, i, l, a, \RESULT) \dot s = 0 \con i = 0)}{loop}{linv \con \DEF B \con \neg B, linv}\]
and $linv \con \DEF B\con \neg B \imp s = \sum x \in 0..l - 1 \cdot a[x]$. We require that the exceptional postconditions must be independent of $\RESULT$ since no value will be returned in failures. With the correctness rule for sequential composition we know that for arbitrary $q$, $r$:

\[\renewcommand{\arraystretch}{1.2}\begin{array}{l}
\TOTAL{(\lambda (l, a, \RESULT) \dot q\,(l, a, \sum x \in 0..l - 1 \cdot a[x])) \con r}{sum1}{q, r}
\end{array}\]
Since for arbitrary $q'$, $r'$,

\[\renewcommand{\arraystretch}{1.2}\begin{array}{rcl}
sum\,(q', r') & = & (\lambda b, \RESULT \dot q'\,(\sum x \in b \cdot x, b) \con MIN \leq \sum x \in b \cdot x \leq MAX) \con r'
\end{array}\]
by definition we know that $sum \prefby_{rel1} sum1$.

Furthermore, $inv1 \, (l, a) \con rel\,b\,(l, a) \imp inv \, b$, which means that the original invariant $inv$ is preserved by the new invariant $inv1$ through relation $rel$. In $sum1$, local variables $s$ and $i$ would be erased on both exits, thus in exceptional termination caused by arithmetic overflow inside the loop, the original state $(l, a)$ remains the same, maintaining the invariant on exceptional exit.

\subsubsection*{Example: Incremental Development}

Another use of partial refinement is to express incremental development. When each of two programs handles some cases of the same specification, then their combination can handle no less cases, while remaining a partial refinement of the specification:
\begin{eqnarr}
     & S \prefby \IF B \THEN T \ELSE \RAISE ~\con~ S \prefby \IF B' \THEN T' \ELSE \RAISE \\
\imp & S \prefby \IF B \THEN T \ELSE (\IF B' \THEN T' \ELSE \RAISE)
\end{eqnarr}%
Moreover, writing $\dis_c$ for \emph{conditional disjunction} (conditional or), defined by $\DEF (B \dis_c B') = \DEF B \con (\VAL B \dis \DEF B')$ and $\VAL (B \dis_c B') = \VAL B \dis (\DEF B \con \VAL B')$, we have
\begin{eqnarr}
\IF B \THEN T \ELSE (\IF B' \THEN T' \ELSE \RAISE) & = & \IF B \dis_c B' \THEN (\IF B \THEN T \ELSE T') \ELSE \RAISE
\end{eqnarr}%
which allows the combination of more than two such programs in a switch-case style, since the right-hand side is again in the ``$\IF \ldots \THEN \ldots \ELSE \RAISE$" form. With more and more such partial implementations combined this way, ideally more cases can be handled incrementally.

\section{Conclusion}
\label{sec:conclusion}


We introduced partial refinement for the description for programs that are unable to fully implement their specifications. Using predicate transformers with two arguments for the semantics of statements allows us to specify the behaviour on both normal and exceptional exits. The partial refinement relation suggests that programs should complete the task as expected on the normal exit, but when they fail and terminate on the exceptional exit, restore the (abstract) initial state. We have not addressed a number of issue, e.g.~rules for partial refinement of loops and recursive procedures (the example of $IntCollection1$ proves refinement on predicate level). King and Morgan as well as Watson present a number of rules for total refinement, although using different programming constructs~\cite{KingMorgan95ExitsInRefinementCalculus,Watson02RefiningExceptions}: $S > T$, pronounced ``$S$ else $T$", can be defined here as $S \meet (T \semi \RAISE)$, the \emph{specification statement} $x:[p, q > r]$ with \emph{frame} $x$ can be defined here as $\{p\} \semi [\overline{q}, \overline{r}]$, where $\overline{q}, \overline{r}$ lifts predicates $q, r$ to relations over $x$, the \emph{exception block} $[\hspace{-.15em}[S]\hspace{-.15em}]$ with $\RAISE H$ in its body, where $H$ is the exception handler, can be defined here as $S \SEMI H$. This allows their total refinement rules to be used here. However, the purpose of partial refinement is different and we would expect different kinds of rules needed.

For two sources of partiality, inherent limitations of an implementation and intentionally missing implementation, the two examples in Sec.~\ref{sec:partial} illustrate how partial refinement can be used to help in reasoning. For the third source of partiality, genuine faults, the approach needs further elaboration. In earlier work, we have applied the notions of partial correctness to three design patterns for fault tolerance, rollback, degraded service, and recovery block~\cite{SekerinskiZhang11PartialCorrectness}. We believe that these ideas can be carried over to partial refinement.

Implementation restrictions are also addressed by IO-refinement, which allows the type of method parameters to be changed in refinement steps~\cite{BoitenDerrick98IORefinement, BanachaPoppletonJeskeaStepney07Retrenchment, DerrickBoiten01RefinementZObjectZ, MikhajlovaSekerinski97ClassInterfaceRefinement, StepneyCooperWoodcock98ZDataRefinement}. Partial refinement and IO-refinement are independent of each other and can be combined.

Data refinement as used here is a generalization of \emph{forward data refinement} to two exits. Forward data refinement is known to be incomplete; for the predicate transformer model of statements, several alternatives have been studied, e.g.~\cite{GardinerMorgan93RuleDataRefinement, vonWright94LatticeDataRefinement}. It remains to be explored how these can be used for partial data refinement.

An alternative to using predicate transformers with two arguments is to encode the exit into the state and represent programs as predicate transformers with one argument, which we have not pursued. The advantage of predicate transformers with two arguments is that the state spaces on normal and exceptional exit can be different, which is useful for local variable declarations that can either fail and leave the state space unchanged or succeed and enlarge the state space. Having two separate postconditions may be methodologically stronger and syntactically shorter.

While predicate transformers allow to describe angelic nondeterminism, we have not made use of that. Finally, we note that partial refinement relies only a single exceptional exit. Common programming languages provide named exceptions and statements with several exits. Exploring a generalization of partial refinement to multiple exits is left as future work.

\paragraph{Acknowledgements.} We are grateful to the reviewers, whose comments were exceptionally detailed and constructive.

\bibliographystyle{eptcs}
\bibliography{exceptionref}

\end{document}